\documentclass[11pt]{article}
\usepackage[margin=2.54cm]{geometry}
\usepackage[dvipsnames,usenames]{color}
\usepackage{amsfonts,amsmath,amssymb,amsthm,mathtools}
\usepackage{paralist}
\usepackage{algorithmic,algorithm}
\usepackage{bm}
\usepackage{xspace}
\usepackage[pagebackref,letterpaper=true,colorlinks=true,pdfpagemode=none,urlcolor=blue,linkcolor=blue,citecolor=BrickRed,pdfstartview=FitH]{hyperref}
\usepackage{xspace,prettyref}
\usepackage{color}

\newcommand{\ignore}[1]{}


\newtheorem{theorem}{Theorem}[section]
\newtheorem{lemma}[theorem]{Lemma}
\newtheorem{claim}[theorem]{Claim}
\newtheorem{proposition}[theorem]{Proposition}

\newtheorem{definition}[theorem]{Definition}

\def\eps{\varepsilon}

        {\hspace*{\fill}$\Box$\par}

\def\R{{\mathbb R}}

\def\x{{\bf x}}


\newcommand{\bK}{{\bf K}}

\newcommand{\cA}{{\cal A}}

\newcommand{\cD}{{\cal D}}

\newcommand{\cF}{{\cal F}}

\newcommand{\cM}{{\cal M}}

\newcommand{\NN}{\mathbb{N}}



\newcommand{\acc}{\hbox{acc}}

\newcommand{\rej}{\hbox{rej}}

\newcommand{\Sec}[1]{\hyperref[sec:#1]{\S\ref*{sec:#1}}} 
\newcommand{\Eqn}[1]{\hyperref[eqn:#1]{(\ref*{eqn:#1})}} 
\newcommand{\Clm}[1]{\hyperref[clm:#1]{Claim~\ref*{clm:#1}}} 
\newcommand{\Fig}[1]{\hyperref[fig:#1]{Figure~\ref*{fig:#1}}} 
\newcommand{\Tab}[1]{\hyperref[tab:#1]{Table~\ref*{tab:#1}}} 
\newcommand{\Thm}[1]{\hyperref[thm:#1]{Theorem~\ref*{thm:#1}}} 
\newcommand{\Lem}[1]{\hyperref[lem:#1]{Lemma~\ref*{lem:#1}}} 
\newcommand{\Prop}[1]{\hyperref[prop:#1]{Proposition~\ref*{prop:#1}}} 
\newcommand{\Cor}[1]{\hyperref[cor:#1]{Corollary~\ref*{cor:#1}}} 
\newcommand{\Def}[1]{\hyperref[def:#1]{Definition~\ref*{def:#1}}} 
\newcommand{\Alg}[1]{\hyperref[alg:#1]{Algorithm~\ref*{alg:#1}}} 
\newcommand{\Ex}[1]{\hyperref[ex:#1]{Example~\ref*{ex:#1}}} 

\begin{document}

\author{Deeparnab Chakrabarty
  \\\\ 
  Microsoft Research India\\
  {\tt dechakr@microsoft.com} \\
\and C. Seshadhri \\\\ 
Sandia National Labs, Livermore\thanks{Sandia National Laboratories is a multi-program laboratory managed and operated by Sandia Corporation, a wholly owned subsidiary of Lockheed Martin Corporation, for the U.S. Department of Energy's National Nuclear Security Administration under contract DE-AC04-94AL85000.} \\
{\tt scomand@sandia.gov}}

\title{An optimal lower bound for monotonicity testing over hypergrids}


\date{}
\maketitle
\def\D{{\mathbf D}}
\def\R{{\mathbf R}}

\begin{abstract} For positive integers $n, d$, consider the hypergrid $[n]^d$ with the coordinate-wise product partial ordering denoted by $\prec$. 
A function $f: [n]^d \mapsto \NN$ is monotone if $\forall x \prec y$, $f(x) \leq f(y)$.
A function $f$ is $\eps$-far from monotone if at least an $\eps$-fraction of values must be changed to make
$f$ monotone. Given a parameter $\eps$, a \emph{monotonicity tester} must distinguish with high probability a monotone function from one that is $\eps$-far.

We prove that any (adaptive, two-sided) monotonicity tester for functions $f:[n]^d \mapsto \NN$ must make
$\Omega(\eps^{-1}d\log n - \eps^{-1}\log \eps^{-1})$ queries. Recent upper bounds show the existence of $O(\eps^{-1}d \log n)$
query monotonicity testers for hypergrids. This closes the question of monotonicity testing for hypergrids
over arbitrary ranges. The previous best lower bound for general hypergrids was a non-adaptive bound
of $\Omega(d \log n)$. 
\end{abstract}

\section{Introduction} \label{sec:intro}

Given query access to a function $f: \D \mapsto \R$, the field of \emph{property testing}~\cite{RS96,GGR98} deals
with the problem of determining properties of $f$ without reading all of it.
Monotonicity testing~\cite{GGLRS00} is a classic problem in property testing. Consider a function $f: \D \mapsto \R$,
where $\D$ is some partial order given by ``$\prec$", and $\R$ is a total order. The function $f$ is monotone
if for all $x \prec y$ (in $\D$), $f(x) \leq f(y)$. The \emph{distance to monotonicity} of $f$ is the minimum
fraction of values that need to be modified to make $f$ monotone. More precisely, define the distance between functions
$d(f,g)$ as $|\{x: f(x) \neq g(x)\}|/|\D|$. Let $\cM$ be the set of all monotone functions.
Then the distance to monotonicity of $f$ is $\min_{g \in \cM} d(f,g)$.

A function is called $\eps$-far from monotone if the distance to monotonicity is at least $\eps$.
A \emph{property tester for monotonicity} is a, possibly randomized, algorithm that takes as input a distance parameter $\eps \in (0,1)$, error parameter $\delta \in [0,1]$, 
and query access to an arbitrary $f$. If $f$ is monotone, then the tester must accept with probability $>1-\delta$.
If it is $\eps$-far from monotone, then the tester rejects with probability $>1-\delta$. (If neither, then the tester
is allowed to do anything.) The aim is to design a property tester using as few queries as possible.
A tester is called \emph{one-sided} if it always accepts a monotone function. A tester is called \emph{non-adaptive}
if the queries made do not depend on the function values. The most general tester is an adaptive two-sided tester.

Monotonicity testing has a rich history and the hypergrid domain, $[n]^d$, has received special attention.
The boolean hypercube ($n=2$) and the total order ($d=1$) are special instances of hypergrids.
Following a long line of work~\cite{EKK+00, GGLRS00,DGLRRS99,LR01,FLNRRS02,AC04,E04,HK04,PRR04,ACCL04,BRW05,BBM11}, 
previous work of the authors~\cite{ChSe13} shows the existence of $O(\eps^{-1} d\log n)$-query
monotonicity testers. Our result is a matching adaptive lower bound that is optimal in all parameters
(for unbounded range functions).
This closes the
question of monotonicity testing for unbounded ranges on hypergrids.
This is also the first adaptive bound for monotonicity testing on general hypergrids.

\begin{theorem} \label{thm:main}
Any (adaptive, two-sided) monotonicity tester for functions $f:[n]^d \mapsto \NN$ requires $\Omega(\eps^{-1} d\log n - \eps^{-1}\log \eps^{-1})$ queries.
\end{theorem}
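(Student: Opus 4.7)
The plan is to apply Yao's minimax principle: exhibit a distribution $\cD_+$ supported on monotone functions $[n]^d \to \NN$ and a distribution $\cD_-$ supported on functions that are $\eps$-far from monotone, and show that no deterministic $q$-query adaptive decision tree can induce output distributions under $\cD_+$ and $\cD_-$ whose total variation exceeds $1/3$ unless $q = \Omega(\eps^{-1}d \log n - \eps^{-1}\log \eps^{-1})$.

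The construction will proceed in three layers. First, for the line case $d=1$ at constant $\eps$, I would use a hard distribution parameterized by $\log n$ hidden bits $b_0,\dots,b_{\log n-1}\in\{0,1\}$, one per dyadic scale, such that flipping any single $b_j$ toggles the instance between monotone and $\Theta(1)$-far and, critically, such that evaluating $f$ at any single $x\in [n]$ reveals information about only $O(1)$ of the $b_j$'s. An information-theoretic argument (chain-rule on KL divergence, conditioning on earlier responses) then delivers an $\Omega(\log n)$ adaptive lower bound. Second, for the $d$-dimensional lift at constant $\eps$, I would attach one independent line-instance $h_i$ to each coordinate axis $i\in [d]$, so the secret now has $\Omega(d\log n)$ bits. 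The function $f$ on $[n]^d$ is constructed so that each query $f(x)$ exposes information about only the $h_i$'s corresponding to $O(1)$ coordinates (e.g., via a selector/masking gadget that routes a point to a specific axis, rather than the naive $f(x)=\sum_i C^{i-1} h_i(x_i)$, which would disclose all axes at once). A constant fraction of axes in $\cD_-$ are drawn from the hard ``far'' distribution, pushing the total distance from monotone to $\Theta(1)$.

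Third, to amplify the $\eps$-dependence, I would embed the above constant-$\eps$ construction into a sub-hypergrid occupying only an $\eps$-fraction of the domain --- concretely, shrink one coordinate to a randomly placed block of length $\Theta(\eps n)$, and extend the function outside this block in an obviously monotone (and far-from-informative) way. The resulting function has distance $\Theta(\eps)$ from monotone. Because the tester does not know which block is ``live'', it must incur an $\Omega(1/\eps)$ multiplicative overhead to locate the block, and then pay the $\Omega(d\log(\eps n))$ cost to test inside it; combined, this yields $\Omega(\eps^{-1}[d\log n - \log\eps^{-1}])$, matching the theorem.

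The principal obstacle will be making the $d$-dimensional step \emph{adaptively} robust. Non-adaptively, one can control how many queries fall into ``informative'' regions by a union bound over query sets chosen in advance. Adaptively, the tester may use early responses to zoom into the most revealing locations, so I cannot fix a static distribution over informative regions. The fix is to design the selector/masking in the $d$-dimensional gadget so that, \emph{conditioned on any transcript of previous query-answer pairs}, the posterior over the unqueried secret bits remains close to uniform except at a $O(1)$-sized subset determined by the current query. A martingale/potential argument on the accumulated information about the secret then lower-bounds the number of queries needed to reach constant advantage, giving the claimed bound.
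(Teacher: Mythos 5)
Your plan has a genuine gap at its core, and it is exactly the gap the paper's main technical idea exists to close. You posit a line-level construction in which ``evaluating $f$ at any single $x\in[n]$ reveals information about only $O(1)$ of the $b_j$'s,'' and your $d$-dimensional and adaptive steps are all driven by that per-query information bound. But the range here is $\NN$, so a single numeric value can in principle carry unbounded information, and in any dyadic-scale construction of the natural sort the perturbation magnitudes must differ across scales to keep the monotone/$\eps$-far dichotomy intact; consequently a single value $f(x)$ typically pins down \emph{which} scale (if any) was perturbed. (Concretely, in the paper's own construction $g_{j,k}(x)=2\cdot\mathrm{val}(x)-2^j-1$ on the perturbed set, so the value at one queried point can reveal $j$ outright.) Your chain-rule/martingale argument therefore has no valid per-step information bound to run on; you would first have to build a distribution over $\NN$-valued functions whose values are indistinguishable but whose monotonicity status differs, and you give no mechanism for that. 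The ``selector/masking gadget'' is similarly left as a wish rather than a construction.

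The paper avoids this entirely by a different move that you do not consider and that appears to be essentially necessary in the unbounded-range setting: a Ramsey-theoretic reduction (due to Fischer, generalized here to arbitrary posets) showing that any adaptive tester for $f:\cD\to\NN$ can be converted, with only a factor-two loss in error, into a \emph{comparison-based} tester---one whose behavior depends only on the relative order of the answers, not the numeric values. Once the tester is comparison-based, the hard distribution is free to leak arbitrary information through its values, and the construction becomes short: the ``yes'' instance is $2\cdot\mathrm{val}$, a total order on $\{0,1\}^{d\log n}$; the ``no'' instances $g_{j,k}$ flip one bit-level $j$ inside one of $1/(2\eps)$ subcubes $S_k$. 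A second observation---that because the yes-instance is a total order, any comparison inconsistent with it can immediately trigger a reject, collapsing the decision tree to a path---reduces comparison-based to non-adaptive, and a short combinatorial ``capturing'' lemma finishes the count. Your $\eps$-amplification idea (one live $\eps n$-block) also differs from the paper's (a partition into $1/(2\eps)$ independently-perturbable subcubes) and would lose a factor of $d$ in the $\log(1/\eps)$ correction term, but that is minor; the real missing ingredient is the reduction to comparison-based testers, without which the adaptive, unbounded-range bound does not go through along the lines you describe.
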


\subsection{Previous work} \label{sec:prev}

The problem of monotonicity testing was introduced by Goldreich et al~\cite{GGLRS00}, with an $O(n/\eps)$
tester for functions $f:\{0,1\}^n \mapsto \{0,1\}$. The first tester for general hypergrids was given
by Dodis et al~\cite{DGLRRS99}. The upper bound of $O(\eps^{-1} d\log n)$ for monotonicity testing was recently proven
in~\cite{ChSe13}. We refer the interested reader to the introduction of~\cite{ChSe13} for a more detailed history of previous upper bounds.

There have been numerous lower bounds for monotonicity testing. We begin by summarizing the state of the art.
The known adaptive lower bounds are $\Omega(\log n)$ for the total order $[n]$ by Fischer~\cite{E04},
and $\Omega(d/\eps)$ for the boolean hypercube $\{0,1\}^d$ by Brody~\cite{Br13}. For general hypergrids,
Blais, Raskhodnikova, and Yaroslavtsev~\cite{BlJh+12} recently proved the first result, a non-adaptive lower bound of $\Omega(d\log n)$.
\Thm{main} is the first adaptive bound for monotonicity testing on hypergrids and is optimal (for arbitrary ranges) in all parameters.

Now for the chronological documentation. The first lower bound was the non-adaptive bound of $\Omega(\log n)$ for the total order $[n]$ by Ergun et al~\cite{EKK+00}.
This was extended by Fischer~\cite{E04} to an (optimal) adaptive bound. For the hypercube domain $\{0,1\}^d$,
Fischer et al~\cite{FLNRRS02} proved the first non-adaptive lower bound of $\Omega(\sqrt{d})$. (This was proven
even for the range $\{0,1\}$.) This was improved to $\Omega(d/\eps)$ by Br\"{i}et et al~\cite{BCG+10}.
Blais, Brody, and Matulef~\cite{BBM11} gave an ingenious reduction from communication complexity
to prove an adaptive, two-sided bound of $\Omega(d)$. (Honing this reduction, Brody~\cite{Br13} improved this bound to $\Omega(d/\eps)$.)
The non-adaptive lower bounds of Blais, Raskhodnikova, and Yaroslavtsev~\cite{BlJh+12} were also achieved
through communication complexity reductions.

%
We note that our theorem only holds when the range is $\NN$, while some previous results hold for restricted
ranges. The results of~\cite{BBM11, Br13} provide lower bounds for range $[\sqrt{d}]$.
The non-adaptive bound of~\cite{BlJh+12} holds even when the range is $[nd]$. In that sense,
the communication complexity reductions provide stronger lower bounds than our result.

\subsection{Main ideas} \label{sec:ideas}

The starting point of this work is the result of Fischer~\cite{E04}, an adaptive lower bound for monotonicity
testing for functions $f:[n] \mapsto \NN$. He shows that adaptive testers can be converted to comparison-based
testers, using Ramsey theory arguments. A comparison-based tester for $[n]$ can be easily converted to a non-adaptive
tester, for which an $\Omega(\log n)$ bound was previously known. We make a fairly simple observation. The main part of Fischer's proof
actually goes through for functions over \emph{any partial order}, so it suffices to prove lower bounds
for comparison-based testers. (The reduction to non-adaptive testers only holds for $[n]$.)

We then prove a comparison-based lower bound of $\Omega(\eps^{-1} d\log n - \eps^{-1}\log \eps^{-1})$
for the domain $[n]^d$. 
As usual, Yao's minimax lemma allows us to prove determinstic lower bounds over some distribution of functions.
The major challenge in proving (even non-adaptive) lower bounds for monotonicity is that the tester might make decisions
based on the actual values that it sees. Great care is required to construct a distribution over functions
whose monotonicity status cannot be decided by simply looking at the values.
But a comparison-based tester has no such power, and
optimal lower bounds over all parameters can be obtained with a fairly clean distribution.

\section{The reduction to comparison based testers} \label{sec:prelim}

Consider the family of functions $f: \D \mapsto \R$, where $\D$ is some partial order, and $\R\subseteq \NN$.
We will assume that $f$ always takes distinct values, so $\forall x,y, f(x) \neq f(y)$. Since we are proving
lower bounds, this is no loss of generality.

\begin{definition} \label{def:tester}
An algorithm $\cA$ is a \emph{$(t,\eps,\delta)$-monotonicity tester}
if $\cA$ has the following properties. For any $f:\D\mapsto \R$, the algorithm $\cA$ makes $t$ (possibly randomized) queries to $f$
and then outputs either ``accept" or ``reject". If $f$ is monotone, then $\cA$ accepts with probability $>1-\delta$.
If $f$ is $\eps$-far from monotone, then $\cA$ rejects with probability $>1-\delta$.
\end{definition}
\def\acc{{\tt acc}}
\def\rej{{\tt rej}}
\def\x{{\mathbf x}}
\def\a{{\mathbf a}}
%
%

Given a positive integer $s$, let $\D^s$ denote the collection of {\em ordered}, $s$-tupled vectors with each entry in $\D$.
We define two symbols \acc~ and \rej, and denote $\D' = \D \cup \{\acc, \rej\}$.
Any $(t,\eps,\delta)$-tester can be completely specified by the following family of functions.
For all $s\leq t$, $\x\in \D^s$, $y\in \D'$, we consider a function $p^y_\x:\R^s \mapsto [0,1]$, with the semantic that for any $\a\in \R^s$, $p^y_\x(\a)$ denotes the probability the tester queries $y$ as the $(s+1)$th query, given that the 
first $s$ queries are $\x_1,\ldots,\x_s$ and $f(\x_i) = \a_i$ for $1\leq i\leq s$. By querying $\acc,\rej$ we imply returning accept or reject. These functions satisfy the following properties.

\begin{align}
& \forall s \leq t,~ \forall\x\in \D^s,~ \forall \a \in \R^s,~ \sum_{y \in \D'} p^y_\x(\a) = 1 \label{eqn:dist}\\
& \forall \x\in \D^t,~ \forall y \in \D,~ \forall \a \in \R^t,~ p^y_\x(\a) = 0 \label{eqn:time}
\end{align}
\Eqn{dist} ensures the decisions of the tester at step $(s+1)$ must form a probability distribution. \Eqn{time} implies that the tester makes at most $t$ queries.

For any positive integer $s$, let $\R^{(s)}$ denote {\em unordered} sets of $\R$ of cardinality $s$. For reasons that will soon become clear, we introduce new functions as follows. For each $s$,  $\x\in \D^s$, $y\in \D'$, and {\em each permutation} $\sigma:[s]\mapsto [s]$, we associate functions $q^y_{\x,\sigma}:\R^{(s)} \mapsto [0,1]$, with the semantic
$$\textrm{ For any set $S = (a_1 < a_2 < \cdots < a_s) \in \R^{(s)}$, } ~~~~q^y_{\x,\sigma}(S) := p^y_\x(a_{\sigma(1)},\ldots,a_{\sigma(s)})$$

That is, $q^y_{\x,\sigma_s}(S)$ sorts the answers in $S$ in increasing order, permutes it according to $\sigma$, and passes the permuted ordered tuple to $p^y_\x$. Any adaptive tester can be specified by these functions. The important point to note is that they are finitely many such functions; their number is upper bounded by $(t|\D|)^{t+1}$. These $q$-functions allow us to define comparison based testers.

\begin{definition} \label{def:comp}
A monotonicity tester $\cA$ is \emph{comparison-based} if for all $s$,$\x\in \D^s,y\in\D'$, and permutations $\sigma:[s]\mapsto[s]$, the function $q^y_{\x,\sigma}$ is a constant function on $\R^{(s)}$. 
In other words, the $(s+1)$th decision of the tester given that the first $s$ questions is $\x$, depends only on the {\em ordering} of the answers received, and not on the values of the answers.
\end{definition}

The following theorem is implicit in the work of Fischer~\cite{E04}.

\begin{theorem} \label{thm:fischer} Suppose there exists a $(t,\eps,\delta)$-monotonicity tester
for functions $f: \D \mapsto \NN$. Then there exists a comparison-based $(t,\eps,2\delta)$-monotonicity tester
for functions $f: \D \mapsto \NN$.
\end{theorem}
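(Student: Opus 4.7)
The plan is to implement the Ramsey-theoretic approach alluded to in the introduction. Observe that an adaptive $(t,\eps,\delta)$-tester $\cA$ is completely specified by the finitely many functions $\{q^y_{\x,\sigma}\}$; concretely, there are at most $N := |\D'|\cdot\sum_{s\le t}|\D|^s\cdot s!$ of them, each of the form $\R^{(s)}\mapsto[0,1]$. I would fix a small granularity $\eta>0$ to be tuned later, partition $[0,1]$ into $\lceil 1/\eta\rceil$ intervals, and interpret each $q^y_{\x,\sigma}$ as a coloring of $\R^{(s)}$ by $\lceil 1/\eta\rceil$ colors. The goal is to locate an infinite subset $\R'\subseteq\NN$ on which every such coloring is monochromatic, so that all $q$-values are within $\eta$ of fixed constants on $\R'$.

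I would produce $\R'$ by iterating the infinite Ramsey theorem over the arities $s=1,2,\ldots,t$. At step $s$, the combined coloring of $s$-subsets induced by all $q$-functions of arity $s$ uses only finitely many colors (a product of at most $|\D'|\cdot|\D|^s\cdot s!$ factors of size $\lceil 1/\eta\rceil$), so the infinite Ramsey theorem yields an infinite monochromatic sub-subset; on this we proceed to arity $s+1$. After $t$ iterations, the surviving $\R'\subseteq\NN$ is infinite. Pick any reference tuple $S^*_s\in(\R')^{(s)}$ for each $s$ and define the new tester $\cA'$ by the constants $\tilde q^y_{\x,\sigma}:=q^y_{\x,\sigma}(S^*_s)$. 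By construction $\tilde q^y_{\x,\sigma}$ does not depend on the observed answer set, so $\cA'$ is comparison-based in the sense of \Def{comp}, and its probabilities still sum to one because $\sum_y q^y_{\x,\sigma}(S^*_s)=1$ by \Eqn{dist}.

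For correctness, take any $f:\D\mapsto\NN$ with distinct values (a WLOG assumption already made in the paper), and let $g:\D\mapsto\R'$ be the composition of $f$ with the unique order-preserving bijection from $f(\D)$ into $\R'$. Since this bijection preserves all pairwise comparisons, $g$ and $f$ share the same monotonicity status and the same distance to monotonicity, and the comparison-based tester $\cA'$ produces identical output distributions on $f$ and $g$. A straightforward coupling along the $t$ steps of execution then shows that the output distributions of $\cA$ and $\cA'$, both run on $g$, differ in total variation by at most $t\cdot|\D'|\cdot\eta$: at each step the distributions over the next symbol differ by at most $|\D'|\eta$ in $\ell_1$, and distributions over whole query histories assemble by the chain rule. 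Taking $\eta<\delta/(t|\D'|)$ makes this gap less than $\delta$, so the acceptance probability of $\cA'$ on $f$ exceeds that of $\cA$ on $g$ by at most $\delta$, yielding $>1-2\delta$ when $f$ is monotone, and symmetrically the rejection probability exceeds $1-2\delta$ when $f$ is $\eps$-far. The main subtlety is the nested Ramsey argument across different arities and the need to keep the error parameter $\eta$ uniform across all $(y,\x,\sigma,s)$; once that is in place, the quantitative bookkeeping is mechanical because $t$, $|\D|$, and $|\D'|$ are all finite while $\NN$ is infinite.
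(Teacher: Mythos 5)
Your proposal is correct, and it reaches the same theorem by a genuinely different route from the paper, even though both use an infinite Ramsey argument and both lose a factor of $2$ in $\delta$. The paper's proof (via \Clm{discrete}) first \emph{discretizes the tester's probabilities}, rounding each $p^y_\x$ down to a multiple of $1/K$ and absorbing the slack into the $\rej$ symbol; this converts $\cA$ into a tester that takes only finitely many probability values, and the paper verifies separately (a short union-bound argument) that this rounded tester is still a $(t,\eps,2\delta)$-tester. With a finite value set in hand, Ramsey is applied with \emph{exact} equality as the coloring, so on the monochromatic set $\R$ the $q$-functions are \emph{literally} constant. Consequently the tester $\cA'$ is just ``run the rounded $\cA$ on $\phi\circ f$'' with $\phi$ the order-preserving injection $\NN\to\R$; no new tester has to be synthesized and no TV bound is needed at that stage. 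You instead keep the tester's probabilities as they are and push the discretization into the \emph{coloring}: you bucket $[0,1]$ into $\eta$-width intervals, so Ramsey only guarantees that on $\R'$ each $q^y_{\x,\sigma}$ is $\eta$-close to a constant rather than exactly constant. You then have to build a genuinely new tester $\cA'$ by freezing each $q$ at a reference tuple (and checking, as you do, that \Eqn{dist} is preserved because all $q^y_{\x,\sigma}$ with fixed $\x,\sigma$ are evaluated at the same $S^*_s$), and then pay for the mismatch via a hybrid/coupling bound of $t|\D'|\eta$ on the transcript distributions, finally choosing $\eta$ small. The two approaches trade the same complexity in different places: the paper's bookkeeping lives in the probability-rounding claim, yours lives in the coupling step. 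Your iterated single-arity Ramsey is also a perfectly good substitute for the paper's packaged ``$k$-wise'' Ramsey theorem; both rely on the fact that an infinite subset of a set monochromatic at arity $s$ remains monochromatic at arity $s$, so one can refine arity by arity.

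One small presentational caveat worth making explicit if you were to write this up in full: the comparison-based tester you construct must be run on $g=\phi\circ f$ (values in $\R'$) for the $\eta$-closeness to apply, and then you invoke the fact that a comparison-based tester has identical output law on $f$ and on $g$. You state this, but it is the crux of why the new tester is a valid tester on all of $\NN$-valued $f$ rather than only on $\R'$-valued functions, so it deserves emphasis.
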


This implies that a comparison-based lower bound suffices for proving a general lower bound on monotonicity testing.
We provide a proof of the above theorem in the next section for completeness.

\subsection{Performing the reduction} \label{sec:comp}

We basically present Fischer's argument, observing that $\D$ can be any partial order. 
A monotonicity tester is called \emph{discrete} if the corresponding functions $p^y_{\x}$ can
only take values in $\{i/K \ | \ 0 \leq i \leq K\}$ for some finite $K$. Note that this implies the functions $q^y_{\x,\sigma}$ also take discrete values.

\begin{claim} \label{clm:discrete} Suppose there exists a $(t,\eps,\delta)$-monotonicity tester $\cA$
for functions $f: \D \mapsto \NN$. Then there exists a discrete $(t,\eps,2\delta)$-monotonicity tester for these functions.
\end{claim}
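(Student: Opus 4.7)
\smallskip\noindent\textbf{Proof plan for \Clm{discrete}:} The plan is to discretize the probabilities defining $\cA$ by rounding them to multiples of $1/K$ for a sufficiently large but finite $K$, and then argue via a coupling that the discretized tester behaves almost identically to $\cA$ on every input. Concretely, I would start from the family $\{p^y_{\x}(\a)\}$ specifying $\cA$, and for every triple $(s,\x,\a)$ with $s\le t$, $\x\in\D^s$, $\a\in\R^s$, produce a perturbed distribution $\{\tilde p^y_{\x}(\a)\}_{y\in \D'}$ whose values all lie in $\{i/K:0\le i\le K\}$ and that still sums to $1$. The discretized tester $\tilde\cA$ is then defined by these $\tilde p^y_\x$ and is discrete by construction; what remains is to calibrate $K$ so that $\tilde\cA$ is a $(t,\eps,2\delta)$-tester.

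For the rounding itself, I would use the standard trick: for each $(s,\x,\a)$, round each of the $|\D'|=|\D|+2$ probabilities $p^y_\x(\a)$ down to the nearest multiple of $1/K$, then add the missing mass (a multiple of $1/K$, bounded by $|\D'|/K$) to any single coordinate to restore the equality $\sum_y \tilde p^y_\x(\a)=1$. This yields total variation distance at most $(|\D|+2)/K$ between the true distribution and the rounded one at every decision point, and the properties \Eqn{dist} and \Eqn{time} continue to hold for $\tilde p$.

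Next, I would compare $\cA$ and $\tilde\cA$ on an arbitrary function $f:\D\mapsto\NN$ via a step-by-step coupling. At each of the at most $t+1$ decision points (the $t$ query choices plus the final $\acc/\rej$ decision), the two testers can be coupled so that they disagree with probability at most $(|\D|+2)/K$. By a union bound over the $t+1$ steps, the coupled executions make identical decisions (hence produce the same output on $f$) with probability at least $1-(t+1)(|\D|+2)/K$. Choosing $K\ge (t+1)(|\D|+2)/\delta$ makes this probability at least $1-\delta$, so $|\Pr[\tilde\cA\text{ accepts }f]-\Pr[\cA\text{ accepts }f]|\le\delta$ for every $f$. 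Combined with the $(t,\eps,\delta)$ guarantees of $\cA$, this gives the $(t,\eps,2\delta)$ guarantees for $\tilde\cA$.

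The only subtle step is verifying that the rounding can be done uniformly over the (countably infinite) set of possible answer vectors $\a\in\R^s$; this is fine because the discretization is defined pointwise in $\a$, so the total variation bound $(|\D|+2)/K$ holds at every decision point regardless of which $\a$ is produced by $f$. I expect the main bookkeeping obstacle to be simply keeping the coupling argument clean when the tester's query at step $s+1$ depends on the answers so far, but this is a routine induction on $s$ once the per-step total variation bound is in hand.
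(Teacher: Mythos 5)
Your proof is correct, and it follows the same first move as the paper (round each $p^y_\x(\a)$ down to a multiple of $1/K$, pointwise in $\a$), but the analysis is organized quite differently. The paper dumps all of the missing mass onto the single symbol $\rej$ and then bounds, sequence by sequence, the multiplicative loss in the probability of each ``good'' (correctly-terminating) query path, splitting into the case where all factors along the path exceed a threshold and the case where some factor is tiny. Your route instead observes that at every decision node the rounded distribution is within total variation distance $O(|\D|/K)$ of the original one, and then runs a step-by-step maximal coupling of $\cA$ and $\tilde\cA$, union-bounding over the at most $t+1$ decision points to get $\bigl|\Pr[\tilde\cA\text{ accepts }f]-\Pr[\cA\text{ accepts }f]\bigr|\le (t+1)(|\D|+2)/K$ uniformly in $f$. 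This is cleaner, two-sided (it does not rely on biasing toward $\rej$), and incidentally yields a much smaller value of $K$ (linear in $t|\D|/\delta$ versus the paper's $t|\D|^t/\delta^2$), though the size of $K$ is immaterial to the claim---only finiteness matters. Your remark that the per-$\a$ rounding handles the countably infinite range is the right thing to flag and is indeed the only subtle point. One small nit: the coupling bound you state should really be on the total variation distance of the full transcript distributions (from which the bound on acceptance probabilities follows); this is exactly what the chain coupling gives, so the argument is fine, but it is worth phrasing as such to avoid any appearance of conditioning on events of varying probability under the two testers.
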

\def\p{\hat{p}}

\begin{proof} We do a rounding on the $p$-functions. Let $K = 100t|\D|^{t}/\delta^2$.
Start with the $p$-functions of the $(t,\eps,\delta)$-tester $\cA$.
For $y \in \D \cup \acc$, $\x\in\D^s$, $\a\in \R^s$, let 
$\p^y_{\x}(\a)$ be the largest value in $\{i/K \ | \ 0 \leq i \leq K\}$
at most $p^y_\x(\a)$. 
Set $\p^\rej_\x(\a)$ so that \Eqn{dist} is maintained. 

Note that for  $y \in \D \cup \acc$, if $p^y_\x(\a) > 10|\D|t/(\delta K)$, then 
$$ \left(1 - \frac{\delta}{10|\D|t}\right) p^y_\x(\a) ~ \leq ~\p^y_\x(\a)~ \leq ~p^y_\x(\a).$$
Furthermore, $\p^\rej_\x(\a) \geq p^\rej_\x(\a)$.
%

The $\p$-functions describe a new discrete tester $\cA'$ that makes at most $t$ queries. We argue that $\cA'$ is a $(t,\eps,2\delta)$-tester.
Given a function $f$ that is either monotone or $\eps$-far from monotone, 
consider a sequence of queries $x_1,\ldots,x_s$ after which $\cA$ returns a {\em correct} decision $\aleph$. 
Call such a sequence good, and let $\alpha$ denote the probability this occurs. We know that the sum of probabilities over all good query sequences is at least $(1-\delta)$. Now,
$$\alpha := p^{x_1}\cdot ~p^{x_2}_{x_1}(f(x_1)) \cdot~ p^{x_3}_{(x_1,x_2)}(f(x_1),f(x_2))\cdots \cdot~ p^{\aleph}_{(x_1,\ldots,x_s)}(f(x_1),\ldots,f(x_s))$$

Two cases arise. suppose all of the probabilities in the RHS are $\geq 10t/\delta K$. Then, the probability of this good sequence arising in $\cA'$ is at least $(1-\delta/10t)^t\alpha \geq \alpha(1-\delta/2)$.
Otherwise, suppose some probability in the RHS is $<10t/\delta K$. Then the total probability mass on such good sequences in $\cA$ is atmost $10t/\delta K\cdot|\D|^t \leq \delta/2$.
Therefore, the probability of good sequences in $\cA'$ is at least $(1-3\delta/2)(1-\delta/2) \geq 1-2\delta$.
That is, $\cA'$ is a $(t,\eps,2\delta)$ tester.
\end{proof}

\ignore{
Henceforth, we focus on discrete testers. We use the notation
$S^{(k)}$ for sets of size $k$ in $S$.
As defined, the $p$-functions have a domain $\NN^s$, but it will be preferable to replace this with $\NN^{(s)}$. 
This will
need some additional notation. Consider $\sigma_s: [s] \mapsto [s]$. Let $\bK$ denote the
discrete set of possible probability values.
\begin{align}
& \forall s \leq t, \forall x_1, x_2, \ldots, x_s \in D, y \in D', \forall \sigma_s: [s] \mapsto [s], \ p_{y, x_1, x_2, \ldots, x_s,\sigma_s}: \NN^{(s)} \mapsto \bK \nonumber \\
& \forall s \leq t, \forall x_1, x_2, \ldots, x_s \in D, \forall S \in \NN^{(s)}, \sum_{y \in D'} p_{y, x_1, x_2, \ldots, x_s}(S) = 1 \nonumber\\
& \forall x_1, x_2, \ldots, x_t, y \in D, \forall S \in \NN^{(t)}, \forall \sigma_t:[t] \mapsto [t], \ p_{y, x_1, x_2, \ldots, x_t, \sigma_t}(S) = 0 \nonumber
\end{align}

Note that any element of $\NN^s$ can be mapped to an element of $\NN^{(s)}$ and function $\sigma_s$.
More precisely, suppose the function $p_{y, x_1, x_2, \ldots, x_s, \sigma_s}$ has the argument $\{v_1, v_2, \ldots, v_s\}$,
where the $v_i$'s are in sorted order. The output value is the probability of selection $y$ as the next query
when $f(x_1) = v_{\sigma(1)}$, $f(x_2) = v_{\sigma(2)}$, $\ldots, f(x_s) = v_{\sigma(s)}$. So the tester formulation
of the previous section can be converted to this form, and hence any property tester can represented
by this finite family of functions.

The notion of a comparison-based tester becomes quite clean now. The functions $p_{y, x_1, x_2, \ldots, x_s,\sigma_s}$
are constant functions iff the tester is comparison-based.

\begin{claim} \label{clm:map} Consider some explicit strictly monotone function $\phi: \NN \mapsto \NN$. Consider the tester $\cA'$
that, given input function $f$, actually runs $\cA$ on $\phi \circ f$. Then $\cA'$ is a $(t,\eps,2\delta)$ tester.
\end{claim}

\begin{proof} Since $\phi$ is a strictly monotone function, the distance to monotonicity of $f$ and $\phi \circ f$
are identical. If $f$ is monotone or $\eps$-far from monotone, the probability that $\cA$ errors of $\phi \circ f$
is at most $2\delta$.
\end{proof}
}
\def\col{{\tt col}}

We introduce some Ramsey theory terminology. 
For any positive integer $i$, a {\em finite} coloring of $\NN^{(i)}$ is a function $\col_i:\NN^{(i)} \mapsto \{1,\ldots,C\}$
for some finite number $C$. An infinite set $X\subseteq \NN$ is called {\em monochromatic} w.r.t  $\col_i$ if for all sets $A,B \in X^{(i)}$, $\col_i(A) = \col_i(B)$. A {\em $k$-wise} finite coloring of $\NN$ is a collection of $k$ colorings
$\col_1,\ldots,\col_k$. (Note that each coloring is over different sized tuples.) 
An infinite set $X\subseteq \NN$ is $k$-wise monochromatic if $X$ is monochromatic w.r.t. all the $\col_i$'s.

The following is a simple variant of Ramsey's original theorem.
(We closely follow the proof of Ramsey's theorem as given in Chap V1, Theorem 4 of~\cite{Bol00}.)


\begin{theorem} \label{thm:ramsey} For any $k$-wise  finite coloring of $\NN$,
there is an infinite $k$-wise monochromatic set $X\subseteq \NN$.
\end{theorem}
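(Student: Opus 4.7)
The plan is to reduce the $k$-wise monochromatic statement to the single-coloring case, and then prove the single-coloring case by induction on the tuple size $i$.

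First, I would show that it suffices to find, given any single finite coloring $\col_i : \NN^{(i)} \to \{1,\dots,C\}$, an infinite monochromatic set $X \subseteq \NN$. Granted that, the $k$-wise claim follows by a nested refinement: find infinite $X_1 \subseteq \NN$ monochromatic for $\col_1$; then apply the single-coloring statement to the restriction of $\col_2$ to $X_1^{(2)}$ (identifying $X_1$ with $\NN$ via an order-preserving bijection) to get infinite $X_2 \subseteq X_1$ monochromatic for both $\col_1$ and $\col_2$; iterate this $k$ times to obtain $X := X_k$.

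For the single-coloring statement, I would induct on $i$. The base case $i=1$ is just the infinite pigeonhole principle: some color class of $\col_1$ is infinite. For the inductive step, assume the result for $(i-1)$-tuple colorings and fix a coloring $\col_i : \NN^{(i)} \to \{1,\dots,C\}$. I would construct an infinite sequence $a_1 < a_2 < \cdots$ in $\NN$ together with a nested sequence of infinite sets $\NN = Y_0 \supseteq Y_1 \supseteq Y_2 \supseteq \cdots$ and colors $c_1, c_2, \ldots \in \{1,\dots,C\}$ as follows. At stage $j$, pick $a_j$ to be the least element of $Y_{j-1}$; on $(Y_{j-1} \setminus \{a_j\})^{(i-1)}$ define the induced $(i-1)$-coloring
\[
\col'_j(S) \;:=\; \col_i(\{a_j\} \cup S).
\]
By the inductive hypothesis applied to $\col'_j$, there is an infinite subset $Y_j \subseteq Y_{j-1} \setminus \{a_j\}$ that is monochromatic for $\col'_j$; let $c_j$ denote that color. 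The key property is that for \emph{any} $(i-1)$-subset $T \subseteq \{a_{j+1}, a_{j+2}, \ldots\}$, we have $\col_i(\{a_j\} \cup T) = c_j$, because all later $a_\ell$'s lie in $Y_j$. Finally, apply pigeonhole to the sequence $(c_j)_{j \geq 1}$: some color $c^\star$ is taken infinitely often, and the set $X := \{a_j : c_j = c^\star\}$ is infinite. For any $i$-subset $\{a_{j_1} < \cdots < a_{j_i}\}$ of $X$, the color is $c_{j_1} = c^\star$, so $X$ is monochromatic for $\col_i$.

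The main obstacle, and the only subtle point, is setting up the recursive construction so that the inductive hypothesis actually applies. Care is needed that (i) each $Y_j$ is infinite, which is exactly what the inductive hypothesis guarantees; (ii) the color $c_j$ depends only on $a_j$ and $Y_j$, not on which later $a_\ell$'s are chosen; and (iii) the eventual pigeonhole is applied to a sequence of colors, not to a coloring of tuples. Once these are in place, everything else is bookkeeping, and the $k$-wise case follows by the refinement argument above without further difficulty.
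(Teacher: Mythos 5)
Your proof is correct, but it decomposes the theorem differently than the paper does. You split the argument into two independent pieces: first a reduction of the $k$-wise statement to classical infinite Ramsey for a single coloring of $i$-subsets, via a refinement chain $X_1 \supseteq X_2 \supseteq \cdots \supseteq X_k$ in which $X_j$ is obtained by applying single-coloring Ramsey to $\col_j$ restricted to $X_{j-1}$; and then the standard proof of single-coloring infinite Ramsey by induction on the tuple size $i$. The paper instead proves the $k$-wise statement in one shot by inducting directly on $k$: at each step it picks the least remaining element $a_r$, absorbs it into $\col_2, \ldots, \col_k$ to form a $(k-1)$-wise coloring, invokes the inductive hypothesis to refine the ambient infinite set, and records a $k$-vector of colors $\C_r$; the final pigeonhole is applied to these color \emph{vectors} rather than to single colors. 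Both arguments run on the same engine (iterated refinement of infinite sets followed by a pigeonhole at the end), but the paper fuses your two inductions (over the number of colorings, and over tuple size) into a single induction on $k$, whereas yours is more modular and leans on the textbook theorem as a black box. Either route is sound; yours is arguably cleaner to read, the paper's is self-contained in one induction and avoids the need to re-index $X_j$ to $\NN$ at each refinement step.
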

\begin{proof} We proceed by induction on $k$. If $k = 1$, then this is trivially true; let $X$ be
the maximum color class. Since the coloring is finite, $X$ is infinite.
We will now iteratively construct an infinite set of $\NN$ via induction.

Start with $a_0$ being the minimum element in $\NN$. Consider a $(k-1)$-wise coloring of $(\NN \setminus \{a_0\})$
$\col'_1,\ldots,\col'_{k-1}$, where $\col'_i(S) := \col_{i+1}(S\cup a_0)$.
By the induction hypothesis, there exists an infinite $(k-1)$-wise monochromatic
set $A_0\subseteq \NN\setminus\{a_0\}$ with respect to coloring $\col'_i$s. 
That is, for $1\leq i\leq k$, and any set $S,T\subseteq A_0$ with $|S|=|T|=i-1$, we have 
$\col_i(a_0\cup S)=\col_i(a_0\cup T) = C^0_i$, say.
\def\C{{\mathbf C}}
Denote the collection of these colors as a vector $\C_0 = (C^0_1, C^0_2, \ldots, C^0_k)$.

 Subsequently, let $a_1$ be the minimum element in $A_0$, and consider the $(k-1)$-wise coloring $\col'$ of $(A_0 \setminus \{a_1\})$ where $\col'_i(S) = \col_{i+1}(S \cup \{a_1\})$ for $S\subseteq A_0\setminus\{a_1\}$.
Again, the induction hypothesis yields an infinite $(k-1)$-wise monochromatic set $A_1$ as before, and similarly the vector $\C_1$. 
Continuing this procedure, 
we get an infinite sequence $a_0, a_1, a_2, \ldots$ of natural numbers, an infinite sequence
of vectors of $k$ colors $\C_0, \C_1, \ldots$, and an infinite nested sequence of infinite sets $A_0 \supset A_1 \supset A_2 \ldots$.
Every $A_r$ contains $a_s, \forall s > r$ and by construction, any set $(\{a_r\} \cup S)$, $S\subseteq A_r$, $|S|=i-1$, 
has color $C^i_r$. Since there are only finitely many colors, some vector of colors occurs infinitely often as $\C_{r_1}, \C_{r_2}, \ldots$. The corresponding
infinite sequence of elements $a_{r_1}, a_{r_2}, \ldots$ is $k$-wise monochromatic.
\end{proof}

\begin{proof} (of \Thm{fischer}) 
Suppose there exists a $(t,\eps,\delta)$-tester for functions $f:\D\mapsto \NN$. We need to show there is a comparison-based $(t,\eps,2\delta)$-tester for such functions.

By \Clm{discrete}, there is a discrete $(t,\eps,2\delta)$-tester $\cA$. Equivalently, we have the functions $q^y_{\x,\sigma}$ as described in the previous section.
We now describe a $t$-wise finite coloring of $\NN$. 
Consider $s \in [t]$.
Given a set $A\subseteq \NN^{(s)}$, $\col_s(A)$ is a vector indexed by $(y,\x,\sigma)$, where $y\in D'$, $\x\in D^s$, and $\sigma$ is a $s$-permutation, whose entry is $q^y_{\x,\sigma}(A)$.
The domain is finite, so the number of dimensions is finite.
Since the tester is discrete, the number of possible colors entries is finite.  
Applying \Thm{ramsey}, we know the existence of a $t$-wise monochromatic infinite set $\R\subseteq \NN$. We have the property that for any $y,\x,\sigma$, and any two sets $A,B \in \R^{(s)}$, we have $q^y_{\x,\sigma}(A) = q^y_{\x,\sigma}(B)$. That is, the algorithm $\cA$ is a comparison based tester for functions with range $\R$.

Consider the strictly monotone map $\phi: \NN \mapsto \R$, where $\phi(b)$ is the $b$th element of $\R$  in sorted order. Now given any function $f:\D\mapsto \NN$, consider the function $\phi\circ f:\D\mapsto \R$. 
Consider an algorithm $\cA'$ which on input $f$ runs $\cA$ on $\phi\circ f$. More precisely, whenever $\cA$ queries a point $x$, it gets answer $\phi\circ f(x)$.  Observe that if $f$ is monotone (or $\eps$-far from monotone), then so is $\phi\circ f$, and therefore, the algorithm $\cA'$ is a $(t,\eps,2\delta)$-tester of $\phi\circ f$. Since the range of $\phi\circ f$ is $\R$,  
$\cA'$ is comparison-based.
\end{proof}%
%
%
%
%
%
%
%

\section{Lower bounds} \label{sec:dist}

\def\f{\widetilde{f}}
\def\v{{\tt val}}
We assume that $n$ is a power of $2$, set $\ell:=\log_2 n$, and think of $[n]$ as $\{0,1,\ldots,n-1\}$. 
For any number $0 \leq z < n$, we think of the binary representation as $z$ as an $\ell$-bit vector $(z_1, z_2, \ldots, z_\ell)$,
where $z_1$ is the least significant bit.

Consider the following canonical, one-to-one mapping $\phi: [n]^d \mapsto \{0,1\}^{d\ell}$.
For any $\vec{y} = (y_1, y_2, \ldots, y_d) \in [n]^d$, we concatenate their binary representations in order 
to get a $d\ell$-bit vector $\phi(\vec{y})$. Hence, we can transform a function $f:\{0,1\}^{d\ell}\mapsto \NN$
into a function $\f:[n]^d \mapsto \NN$ by defining $\f(\vec{y}) := f(\phi(\vec{y}))$.
%
%

We will now describe a distribution of functions over the boolean hypercube with equal mass on monotone and $\eps$-far from monotone functions. The key property is that for a function drawn from this distribution, any deterministic comparison based algorithm errs in classifying it with non-trivial probability. 
This property will be used in conjunction with the above mapping to get our final lower bound.

\subsection{The hard distribution} \label{sec:hard}

We focus on functions $f:\{0,1\}^m \mapsto \NN$. (Eventually, we set $m = d\ell$.)
Given any $x \in \{0,1\}^m$, we let $\v(x) :=  \sum_{i = 1}^{m} 2^{i-1} x_i$ denote the number for which $x$ is the binary representation. Here, $x_1$ denotes the least significant bit of $x$. 

For convenience, we let $\eps$ be a power of $1/2$. 
For $k \in \{1,\ldots,\frac{1}{2\eps}\}$, we let  
$$S_k := \{x: \v(x) \in [2(k-1)\eps 2^m, 2k\eps 2^m-1)~~\}.$$
Note that $S_k$'s partition the hypercube, with each $|S_k| = \eps2^{m+1}$.
In fact, each $S_k$ is a subhypercube of dimension $m' := m+1-\log(1/\eps)$, with the minimal element having all zeros in the $m'$ least significant bits, and the maximal element having all ones in those.

We describe a distribution $\cF_{m,\eps}$ on functions. 
The support of $\cF_{m,\eps}$ consists $f(x) = 2\v(x)$,
and $\frac{m'}{2\eps}$ functions indexed as $g_{j,k}$ with $j\in [m']$ and $k\in[\frac{1}{2\eps}]$, defined as follows.
$$ g_{j,k}(x) = 
\left\{ 
\begin{array} {l l}
2\v(x) - 2^{j} - 1& \quad \textrm{if $x_j = 1$ and $x \in S_k$}\\
2\v(x) & \quad \textrm{otherwise} 
\end{array} \right.
$$

The distribution $\cF_{m,\eps}$ puts probability mass $1/2$ on the function $f=2\v$ and $\frac{\eps}{m'}$ on each of the $g_{j,k}$'s. 
All these functions take distinct values on their domain.
Note that $2\v$ induces a total order on $\{0,1\}^m$.\smallskip

\noindent
\textbf{The distinguishing problem:} Given query access to a random function $f$ from $\cF_{m,\eps}$, we want a deterministic comparison-based algorithm
that declares that $f = 2\v(x)$ or $f \neq 2\v(x)$. We refer to any such algorithm as a \emph{distinguisher}. 
Naturally, we say that the distinguisher errs on $f$ if it's declaration is wrong.
Our main lemma is the following.

\begin{lemma} \label{lem:main-hard} Any deterministic comparison-based distinguisher that makes less than $\frac{m'}{8\eps}$ queries errs
with probability at least $1/8$.
\end{lemma}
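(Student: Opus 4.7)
The plan is to fix an arbitrary deterministic comparison-based distinguisher making $t < m'/(8\eps)$ queries and lower bound its error on a random $f \sim \cF_{m,\eps}$. If the distinguisher declares ``$f \neq 2\v$'' on input $f = 2\v$, it already errs on the mass-$1/2$ event $\{f = 2\v\}$, so we may assume it declares ``$f = 2\v$'' on input $2\v$. Let $Q^*$ be the (deterministic, fixed) set of queries it makes on this execution. Since the distinguisher is comparison-based, on any other input $f$ its execution coincides with its execution on $2\v$ as long as the pattern of pairwise comparisons among $\{f(x)\}_{x \in Q^*}$ matches that among $\{2\v(x)\}_{x \in Q^*}$.

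Next I will characterize when the coincidence above holds. One observes that $g_{j,k}$ agrees with $2\v$ outside $T_{j,k} := \{x \in S_k : x_j = 1\}$; for $x \in T_{j,k}$, the value $g_{j,k}(x) = 2\v(x) - 2^j - 1$ equals $2\v(x') - 1$, a new odd value lying strictly between $2\v(x')-2$ and $2\v(x')$, where $x'$ denotes $x$ with bit $j$ cleared. A short case check then shows that the comparison patterns of $g_{j,k}$ and $2\v$ on $Q^*$ disagree if and only if there exists an ordered pair $(x, y) \in (Q^* \cap S_k)^2$ with $x_j = 1$, $y_j = 0$, and $0 < \v(x) - \v(y) \leq 2^{j-1}$; call such a $(j, k)$ \emph{detected} by $Q^*$. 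Every non-detected $g_{j,k}$ forces the distinguisher to err, so it suffices to show that fewer than $m'/(8\eps)$ of the $(j,k)$-pairs are detected: the error contribution from the non-detected $g_{j,k}$'s is then at least $\left(m'/(2\eps) - m'/(8\eps)\right) \cdot (\eps/m') = 3/8 > 1/8$.

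The core combinatorial claim, driving the proof, is that within each $S_k$ the number of detected levels $j$ is at most $q_k - 1$, where $q_k := |Q^* \cap S_k|$ (trivially $0$ when $q_k = 0$); summing over $k$ then yields at most $|Q^*| < m'/(8\eps)$ detected pairs in total. To prove the per-$k$ bound, sort $Q^* \cap S_k$ as $y_1 < \cdots < y_{q_k}$ by $\v$-value. First, if some pair $(y_l, y_i)$ with $l < i$ witnesses level $j$, then bit $j$ flips from $0$ to $1$ at some consecutive pair $(y_r, y_{r+1})$ with $r \in [l, i-1]$, and that consecutive pair inherits the gap bound $\v(y_{r+1}) - \v(y_r) \leq \v(y_i) - \v(y_l) \leq 2^{j-1}$; so every detected level has a consecutive-pair witness. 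The second step---which I expect to be the main obstacle---is that each pair of queries witnesses at most one level. Supposing $(y, x)$ with $\v(y) < \v(x)$ witnessed both $j_1 < j_2$, one writes $\v(x) - \v(y) = 2^{j_1-1} + 2^{j_2-1} + A + B$ with $A$ aggregating bit-contributions at positions $1 \le j < j_2$, $j \neq j_1$, and $B$ at $j > j_2$; any bit $j > j_2$ with $x_j > y_j$ would already push $\v(x) - \v(y)$ above $2^{j_1-1}$, so $B$ is either $0$ or at most $-2^{j_2}$, and combined with $|A| \leq 2^{j_2-1} - 1 - 2^{j_1-1}$ the requirement $1 \le \v(x) - \v(y) \leq 2^{j_1-1}$ becomes infeasible in both cases. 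With one-level-per-pair established, the map sending each detected level to a consecutive-pair witness is injective, giving the per-$k$ bound and completing the lemma.
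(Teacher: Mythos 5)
Your proposal is correct, and while the overall skeleton (fix the execution path on $2\v$, then count the $g_{j,k}$'s whose comparison pattern on the queried set agrees with $2\v$'s) matches the paper's, the combinatorics is genuinely different in a way worth noting.

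The paper proves a necessary condition: if $g_{j,k}$ is distinguished, then some pair in $X_k$ \emph{captures} $j$, meaning $j$ is the largest coordinate where the pair differs (Claim~\ref{clm:viol}); it then bounds captured coordinates per $S_k$ by $|X_k|-1$ via induction on $|X_k|$, splitting on the highest captured bit (Claim~\ref{clm:cap}). You instead derive an exact \emph{iff} characterization of when the comparison pattern flips---an ordered pair in $Q^*\cap S_k$ with $x_j=1$, $y_j=0$, $0<\v(x)-\v(y)\le 2^{j-1}$---and it is precisely this sharper gap bound that makes your alternative counting work: because of the $2^{j-1}$ bound, every detected level has a witness among consecutive (in $\v$-order) pairs of $Q^*\cap S_k$, and each such pair detects at most one level (since, as your case analysis correctly shows, the detected $j$ must be the largest differing bit). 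This yields $\le q_k-1$ detected levels via a direct injection into consecutive pairs, with no induction. One should note the consecutive-pair trick would \emph{not} work with the paper's weaker notion of captures (a consecutive pair between $y_l,y_i$ need not inherit $j$ as its largest differing bit without the gap constraint), so the sharper characterization and the cleaner injection argument are genuinely intertwined. Your final count also differs: summing $q_k-1$ over all $k$ gives $<t<m'/(8\eps)$ detections total, hence error $\ge 3/8$, whereas the paper focuses only on the $k$'s with $|X_k|\le m'/2$ and lands exactly at $1/8$; your bound is a bit stronger, though both suffice. Finally, you fold the reduction to a fixed query set $Q^*$ directly into the argument rather than stating it as a separate proposition (as Proposition~\ref{prop:comp} does), but it is the same idea: on a total order, following the $2\v$-consistent comparison outcomes pins down the execution.
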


\noindent
The following proposition allows us to focus on {\em non-adaptive} comparison based testers. 

\begin{proposition}	\label{prop:comp}
Given any deterministic comparison-based distinguisher $\cA$ for $\cF_{m,\eps}$ that makes at most $t$ queries, there exists a deterministic {\em non-adaptive} comparison-based distinguisher $\cA'$ making at most $t$ queries whose probability of error on $\cF_{m,\eps}$ is at most that of $\cA$.
\end{proposition}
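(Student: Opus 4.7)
The plan is to construct $\cA'$ by hardcoding the query sequence that the adaptive $\cA$ makes on the ``base'' function $f = 2\v$. Because $\cA$ is deterministic and comparison-based, and because the ordering of every subset of $\{2\v(x)\}_x$ is completely determined by $\v$, the execution of $\cA$ on $f = 2\v$ is fully determined: there is a fixed query sequence $Q^* = (x_1^*,\ldots,x_t^*)$ and a fixed output $a^* \in \{\acc,\rej\}$. Define $\cA'$ to query $Q^*$ non-adaptively, and then to output $a^*$ if the ordering of the responses agrees with the canonical $\v$-ordering on $Q^*$, and to output $\rej$ otherwise.

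The key observation driving the analysis is the following step-by-step reproduction claim: for any $h$ in the support of $\cF_{m,\eps}$, if the $h$-ordering on $Q^*$ matches the $\v$-ordering on $Q^*$, then the same holds on every prefix $(x_1^*,\ldots,x_s^*)$, so by induction on $s$ the comparison-based $\cA$, run on $h$, makes its $(s{+}1)$st query equal to $x_{s+1}^*$ and ultimately outputs $a^*$. In particular, every function on which $\cA'$ outputs $a^*$ is a function on which $\cA$ also outputs $a^*$.

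The error comparison then proceeds by case analysis on $a^*$. If $a^* = \acc$, both algorithms correctly accept on $f = 2\v$; for each $g_{j,k}$ whose induced ordering on $Q^*$ agrees with the $\v$-ordering, both $\cA$ and $\cA'$ output $\acc$ and both err; and for each remaining $g_{j,k}$, $\cA'$ correctly outputs $\rej$ and incurs no error while $\cA$ may or may not err. Thus the set of functions on which $\cA'$ errs is contained in the set on which $\cA$ errs. If instead $a^* = \rej$, then $\cA$ already errs with probability at least $1/2$ (on the mass-$1/2$ function $f=2\v$), while $\cA'$ always outputs $\rej$ and so errs only on $f = 2\v$, giving $\Pr[\cA' \text{ errs}] = 1/2 \leq \Pr[\cA \text{ errs}]$. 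The only point requiring care is the step-by-step reproduction claim, and I do not anticipate any real obstacle — it is an immediate consequence of the comparison-based property of $\cA$ together with the fact that prefix orderings are inherited from the full ordering on $Q^*$.
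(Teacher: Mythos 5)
Your proof is correct and takes essentially the same route as the paper: fix the query path $\cA$ follows on the totally-ordered function $2\v$, query those points non-adaptively, mimic $\cA$'s decision when the observed ordering is consistent with $\v$, and reject otherwise (a safe move since only non-monotone functions can produce an inconsistency). The paper states this as turning the comparison tree into a single path with early rejection on inconsistent branches, and its one-line error bound (``$\cA'$ may only additionally reject some $f\neq 2\v$'') subsumes your case split on $a^*$.
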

\def\ans{{\tt ans}}
\begin{proof}
We represent $\cA$ as a comparison tree. For any path in $\cA$, the total number of distinct domain points involved in comparisons is at most $t$.
Note that $2\v(x)$ is a total order, since for any $x,y$ either $\v(x) < \v(y)$ or vice versa.
For any comparison in $\cA$, there is an outcome inconsistent with this ordering. (An outcome ``$f(x) < f(y)$"
where $\v(x) > \v(y)$ is inconsistent with the total order.)
We construct a comparison tree $\cA'$
where we simply reject whenever a comparison is inconsistent with the total order, and otherwise mimics $\cA$. 
The comparison tree of $\cA'$ has an error probability at most that of $\cA$ (since it may reject a few $f\neq 2\v$), and is just a path. Hence, it
can be modeled as a non-adaptive distinguisher. We query upfront all the points involving points on this path, and make the relevant comparisons
for the output.
%
%
\end{proof}

\def\v{{\tt val}}

\noindent
Combined with \Prop{comp}, the following lemma completes the proof of \Lem{main-hard}.
\begin{lemma}\label{lem:hard}
Any deterministic, non-adaptive, comparison-based distinguisher $\cA$ making fewer than $t \leq \frac{m'}{8\eps}$ queries, errs with probability at least $1/8$.
\end{lemma}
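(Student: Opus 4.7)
The plan is to fix the non-adaptive distinguisher's query set $Q$ (with $|Q|<t\le m'/(8\eps)$) and show that far too many of the $g_{j,k}$ induce the same relative ordering on $Q$ as $f=2\v$, so that $\cA$ cannot achieve error below $1/8$. Since $\cA$ is non-adaptive and comparison-based, its output depends only on the ordering $\pi_Q(f)$, so I let $\tau:=\pi_Q(2\v)$ and $B:=\{(j,k):\pi_Q(g_{j,k})=\tau\}$. A case split on what $\cA$ commits to on input $\tau$ shows that its error probability under $\cF_{m,\eps}$ is at least $\min\!\left(\tfrac{\eps}{m'}|B|,\,\tfrac{1}{2}\right)$: answering ``$=$'' on $\tau$ misclassifies every $g_{j,k}$ with $(j,k)\in B$ (weight $\eps/m'$ each), while answering ``$\neq$'' misclassifies $f=2\v$ (weight $1/2$). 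So it suffices to prove $|\overline B|\le |Q|$, since then $|B|\ge\tfrac{m'}{2\eps}-\tfrac{m'}{8\eps}=\tfrac{3m'}{8\eps}$ and the error is at least $3/8>1/8$.

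To bound $|\overline B|$, set $Q_k:=Q\cap S_k$ and $n_k:=|Q_k|$. Using $g_{j,k}(x)=2\v(x^-)-1$ on $\{x\in S_k : x_j=1\}$ (where $x^-$ is $x$ with bit $j$ cleared) and $g_{j,k}=2\v$ elsewhere, a direct case analysis shows that $(j,k)\in\overline B$ iff there exist $p,q\in Q_k$ with $p_j=1$, $q_j=0$, and $\v(p)-2^{j-1}\le\v(q)<\v(p)$ (note that any such $q$ automatically lies in $S_k$, since $S_k$ is a $\v$-interval containing $[\v(p^-),\v(p))$). Sort $Q_k$ in $\v$-order as $q^{(1)}<\cdots<q^{(n_k)}$; walking from $q$ up to $p$ in this order, bit $j$ must switch from $0$ to $1$ at some consecutive step whose $\v$-gap is at most $\v(p)-\v(q)\le 2^{j-1}$, so that consecutive pair is itself a witness. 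Thus the number of bad $j\in[m']$ for a given $k$ is at most the number of consecutive pairs in $Q_k$ (there are $n_k-1$) times the maximum number of bits a single consecutive pair can simultaneously witness.

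The crux, which I expect to be the main obstacle, is to show that each consecutive pair in $Q_k$ witnesses at most one bit $j$. Suppose for contradiction that $j_1<j_2$ both witness the same pair $(q^{(i)},q^{(i+1)})$, set $d:=\v(q^{(i+1)})-\v(q^{(i)})\le 2^{j_1-1}$, and partition the differing bits as $S^+:=\{l:q^{(i+1)}_l=1,\ q^{(i)}_l=0\}$ and $S^-:=\{l:q^{(i+1)}_l=0,\ q^{(i)}_l=1\}$, so that $d=\sum_{l\in S^+}2^{l-1}-\sum_{l\in S^-}2^{l-1}$. Then $\{j_1,j_2\}\subseteq S^+$, and the maximum element $H$ of $S^+\cup S^-$ must lie in $S^+$ because $q^{(i+1)}>q^{(i)}$. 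A worst-case accounting, minimizing $d$ by taking $H=j_2$ and $S^-$ equal to all bits below $j_2$ not in $S^+$ (so $S^+\cap\{1,\ldots,j_2-1\}=\{j_1\}$), gives
$$d\ge 2^{j_2-1}+2^{j_1-1}-\bigl(2^{j_2-1}-1-2^{j_1-1}\bigr)=2^{j_1}+1>2^{j_1-1},$$
contradicting $d\le 2^{j_1-1}$; the case $H>j_2$ yields an even larger lower bound $d\ge 2^{j_2}+2^{j_1}+1$. Hence each consecutive pair contributes at most one bad $j$, so $|\overline B|\le\sum_k(n_k-1)\le|Q|$, which combined with the setup above completes the proof.
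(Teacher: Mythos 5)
Your proposal is correct, and it reaches the same per-block bound as the paper's proof by a noticeably different mechanism. The paper introduces the notion of a pair ``capturing'' a coordinate $j$ (the largest differing coordinate), proves in \Clm{viol} that distinguishability forces a capture in $X_k$, and proves in \Clm{cap} by induction on $|Y|$ that a set $Y$ captures at most $|Y|-1$ coordinates. You instead characterize distinguishability of $g_{j,k}$ by a ``witness'' condition on a pair $p,q \in Q_k$ (bit $j$ flips $0\to1$ and the $\v$-gap is at most $2^{j-1}$), observe that any witness reduces to a \emph{consecutive} pair after sorting $Q_k$ by $\v$, and then show by direct bit arithmetic that a single consecutive pair can witness at most one $j$ (because a second witnessed index $j_1 < j_2$ would force $d \ge 2^{j_1}+1 > 2^{j_1-1}$). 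This gives $|\overline B| \le \sum_k(n_k-1) \le |Q|$ without induction. Both routes yield $\le |Q_k|-1$ bad coordinates per block; the paper's inductive \Clm{cap} is a clean standalone lemma, while your sort-and-consecutive-pairs argument is more hands-on and also makes explicit that the relevant $j$ is forced to be the top differing bit. A small bonus: you do the final count over all $k$ rather than restricting to blocks with $|X_k| \le m'/2$ as the paper does, so you get error $\ge 3/8$ rather than $1/8$; this is a cosmetic improvement (the paper's machinery would yield the same if pushed) and does not change the asymptotic conclusion.
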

\begin{proof} 


Let $X$ be the set of points queried by the distinguisher. Set $X_k =: X \cap S_k$; these form a partition of $X$.
We say that a pair of points $(x,y)$ \emph{captures} the (unique) coordinate $j$, if $j$ is the largest
coordinate where $x_j \neq y_j$. (By largest coordinate, we refer to the value of the index.)
For a set $Y$ of points, we say $Y$ captures coordinate $j$ if there is a pair in $Y$ that captures $j$.

\begin{claim}\label{clm:viol}
For any $j,k$, if the algorithm distinguishes between $\v$ and $g_{j,k}$, then $X_k$ captures $j$.
\end{claim}
\begin{proof} 
If the algorithm distinguishes between $\v$ and $g_{j,k}$, there must exist $(x,y)\in X$ such that
 $\v(x) < \v(y)$ and $g_{j,k}(x) > g_{j,k}(y)$. We claim that $x$ and $y$ capture $j$; this will also imply they lie in the same $S_{k'}$ since the $m-j$ most significant bit of $x$ and $y$ are the same.

%
%
%
%

Firstly, observe that we must have $y_j = 1$ and $x_j = 0$; otherwise, $g_{j,k}(y) - g_{j,k}(x) \geq 2(\v(y) - \v(x)) > 0$ contradicting the supposition. Now suppose $(x,y)$ don't capture $j$ implying there exists $i>j$ which is the largest coordinate at which they differ. Since $\v(y)>\v(x)$ we have $y_i=1$ and $x_j=0$. Therefore,
we have $$g_{j,k}(y) - g_{j,k}(x) \geq 2(\v(y)-\v(x)) - 2^{j} - 1
\geq (2^i + 2^j) - \sum_{1 \leq r < i} 2^r - 2^j - 1 > 0.$$
So, $x,y$ capture $j$ and lie in the same $S_{k'}$. If $k'\neq k$, then again $g_{j,k}(y) - g_{j,k}(x) = 2(\v(y)-\v(x)) > 0$. Therefore, $X_k$ captures $j$.
%
%
%
%
%
\end{proof}

\noindent
The following claim allows us to complete the proof of the lemma.
\begin{claim}\label{clm:cap} A set $Y$ captures at most $|Y|-1$ coordinates.
\end{claim}

\begin{proof} We prove by induction on $|Y|$. When $|Y| = 2$, this is trivially true.
Otherwise, pick the largest coordinate $j$ captured by $Y$ and let $Y_0 =\{y: y_j = 0\}$ and $Y_1 = \{y:y_j = 1\}$. 
By induction, $Y_0$ captures at most $|Y_0|-1$ coordinates, and $Y_1$ captures at most $|Y_1| -1$ coordinates.
Pairs $(x,y)\in Y_0\times Y_1$ only capture coordinate $j$. Therefore, the total number of captured coordinates is at most $|Y_0| - 1 + |Y_1| - 1 + 1 = |Y| - 1$.
\end{proof}

\noindent

If $|X| \leq m'/8\eps$, then there exist at least $1/4\eps$ values of $k$
such that $|X_k| \leq m'/2$. 
By \Clm{cap}, each such $X_k$ captures at most $m'/2$ coordinates.
Therefore, there exist at least 
$\frac{1}{4\eps}\cdot\frac{m'}{2} = \frac{m'}{8\eps}$ functions $g_{j,k}$'s that are indistinguishable from the monotone function $2\v$ to a comparison-based procedure
that queries $X$. 
This implies the distinguisher must err (make a mistake on either these $g_{j,k}$'s or $2\v$) with probability at least $\min(\frac{\eps}{m'}\cdot\frac{m'}{8\eps}, 1/2) = 1/8$.
\end{proof}

\subsection{The final bound}
\def\v{{\tt val}}

Recall, given  function $f:\{0,1\}^{d\ell}\mapsto \NN$, we have 
the function $\f:[n]^d \mapsto \NN$ by defining $\f(\vec{y}) := f(\phi(\vec{y}))$.
We start with the following observation.

\begin{proposition}\label{prop:dist}
The function $\widetilde{2\v}$ is monotone and
every $\widetilde{g_{j,k}}$ is $\eps/2$-far from being monotone.
\end{proposition}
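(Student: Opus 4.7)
For the first claim I would simply unpack the map $\phi$. Since $\phi$ concatenates the binary representations of the $d$ coordinates of $\vec{y} = (y_1,\ldots,y_d)$, we have $\v(\phi(\vec{y})) = \sum_{r=1}^{d} c_r y_r$ for some positive weights $c_r \in \{1, n, n^2, \ldots, n^{d-1}\}$ (depending on the order in which the coordinates' bits are placed inside $\phi$). This expression is coordinate-wise nondecreasing in $\vec{y}$, so $\widetilde{2\v}(\vec{y}) = 2\v(\phi(\vec{y}))$ is monotone on $[n]^d$.

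For the distance bound on $\widetilde{g_{j,k}}$, my plan is to exhibit an explicit matching of violated pairs in the hypergrid. For each $x \in S_k$ with $x_j = 1$, let $x'$ be $x$ with bit $j$ flipped to $0$. Since $j \in [m']$, bit $j$ is among the low-order ``free'' bits of the subhypercube $S_k$, so $x' \in S_k$ as well. Thus $g_{j,k}(x') = 2\v(x')$ while $g_{j,k}(x) = 2\v(x) - 2^j - 1 = 2\v(x') - 1$, giving $g_{j,k}(x) < g_{j,k}(x')$ even though $\v(x) > \v(x')$.

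The essential step is to pull this back to the hypergrid order under $\phi$. Because $\phi$ concatenates binary representations, flipping a single bit of the codomain corresponds to changing exactly one coordinate $y_i$ of $\phi^{-1}(x)$ by $\pm 2^{\ell'-1}$ in $[n]^d$. The flip from $1$ to $0$ decreases that coordinate and leaves all others fixed, so writing $\vec{y} = \phi^{-1}(x)$ and $\vec{y}' = \phi^{-1}(x')$, we obtain $\vec{y}' \prec \vec{y}$ in $[n]^d$ together with $\widetilde{g_{j,k}}(\vec{y}') > \widetilde{g_{j,k}}(\vec{y})$ — a genuine hypergrid violation.

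To finish, note the pairs $\{(\vec{y}', \vec{y})\}$ form a matching, since each modified point has a unique partner obtained by flipping bit $j$. The number of modified points is $|S_k \cap \{x_j = 1\}| = |S_k|/2 = \eps \cdot 2^m = \eps \cdot n^d$, so the matching has size $\eps \cdot n^d$, and any monotone modification of $\widetilde{g_{j,k}}$ must change at least one endpoint per matched pair. This gives distance to monotonicity at least $\eps$, which is stronger than the claimed $\eps/2$. The only nontrivial step is the hypergrid comparability of the bit-flip partner, and this drops out directly from the concatenation structure of $\phi$; the rest is routine counting.
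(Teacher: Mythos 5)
Your proof is correct and follows essentially the same route as the paper: monotonicity of $\widetilde{2\v}$ from the fact that $\v\circ\phi$ is coordinate-wise increasing, and the distance bound from the matching of bit-$j$ flips inside $S_k$, which $\phi^{-1}$ carries to comparable pairs in $[n]^d$ because a single-bit flip moves exactly one hypergrid coordinate. Your count giving $\eps$-far is also consistent with (indeed slightly stronger than) the paper's conservative $\eps/2$ claim; the only cosmetic slip is the undefined exponent $\ell'$, which should be the bit position within the affected coordinate's segment.
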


\begin{proof} Let $\vec{u}$ and $\vec{v}$ be elements in $[n]^d$ such that $\vec{u} \prec \vec{v}$.
We have $\v(\phi(\vec{u})) < \v(\phi(\vec{v}))$, so $\widetilde{2\v}$ is monotone.
For the latter, it suffices to exhibit a matching of violated pairs of cardinality $\eps2^{d\ell}$ for $\widetilde{g_{j,k}}$. This is 
given by pairs $(\vec{u},\vec{v})$ where $\phi(\vec{u})$ and $\phi(\vec{v})$ only differ in their
$j$th coordinate, and are both contained in $S_k$. Note that these pairs are comparable in $[n]^d$
and are violations.
\end{proof}

\begin{theorem}\label{thm:hc}
Any $(t,\eps/2,1/16)$-monotonicity tester for $f:[n]^d\mapsto \NN$, must have $t\geq \frac{d\log n - \log(1/\eps)}{8\eps}$.
\end{theorem}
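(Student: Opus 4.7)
The plan is to chain together the three tools already developed in the paper: the reduction to comparison-based testers (\Thm{fischer}), the comparison-based lower bound on the boolean hypercube (\Lem{main-hard}), and the structural transfer of hardness through the embedding $\phi$ (\Prop{dist}), with Yao's minimax principle used to pass from randomized to deterministic algorithms.

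First I would suppose we have a $(t,\eps/2,1/16)$-monotonicity tester for $f:[n]^d\mapsto \NN$ and feed it through \Thm{fischer} to obtain a randomized comparison-based $(t,\eps/2,1/8)$-tester $\cA$; note the strict inequality in \Def{tester} gives an error probability strictly less than $1/8$ on each input. Next, I would consider the distribution on $[n]^d\mapsto \NN$ obtained by pushing $\cF_{d\ell,\eps}$ through the map $f\mapsto \widetilde{f}$. By \Prop{dist}, every function in the support is either monotone (the image of $2\v$) or $\eps/2$-far from monotone (the images of the $g_{j,k}$), so $\cA$'s accept/reject verdict answers the distinguishing question, and its average error on this distribution is strictly less than $1/8$. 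Yao's minimax principle then yields a \emph{deterministic} comparison-based algorithm $\cA^{\star}$ (just a fixing of $\cA$'s coins) whose error on the pushed-forward distribution remains strictly less than $1/8$.

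The third step is to pull $\cA^{\star}$ back through $\phi$ to get a deterministic distinguisher for $\cF_{d\ell,\eps}$: given query access to $f:\{0,1\}^{d\ell}\mapsto \NN$, simulate $\cA^{\star}$ on $\widetilde{f}$ by answering each query $\vec{y}$ with $f(\phi(\vec{y}))$. Because $\widetilde{f}$-values are literally $f$-values, every comparison $\cA^{\star}$ makes is a comparison of $f$-values, so the pulled-back distinguisher is comparison-based. Its error on $\cF_{d\ell,\eps}$ equals $\cA^{\star}$'s error on $\widetilde{\cF_{d\ell,\eps}}$, hence is strictly less than $1/8$.

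Finally, invoking the contrapositive of \Lem{main-hard} with $m=d\ell$ (so that $m'=d\log n+1-\log(1/\eps)$) forces
\[
t \;\geq\; \frac{m'}{8\eps} \;=\; \frac{d\log n + 1 - \log(1/\eps)}{8\eps} \;\geq\; \frac{d\log n - \log(1/\eps)}{8\eps}.
\]
There is no deep obstacle here; the main thing to be careful about is tracking the constant $\delta$ through \Thm{fischer}: its factor-of-two loss sends $1/16\mapsto 1/8$, which is exactly the threshold in \Lem{main-hard}, so I must rely on the strict inequalities in the tester definition (and the fact that Yao's principle preserves strict expected-error bounds) to land on the correct side when applying the contrapositive.
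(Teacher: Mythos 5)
Your proof is correct and follows essentially the same route as the paper: reduce to comparison-based testers via \Thm{fischer}, push the hard distribution $\cF_{d\ell,\eps}$ forward through the embedding $\phi$, note via \Prop{dist} that a tester on the pushed-forward distribution acts as a distinguisher for $\cF_{d\ell,\eps}$, use Yao's minimax to derandomize, and apply \Lem{main-hard} (equivalently, \Prop{comp} with \Lem{hard}) for the final count. The extra care you take with strict inequalities and the $+1$ in $m'$ is sound but not needed for the stated bound, which the paper obtains by the same chain of reductions stated more tersely.
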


\begin{proof} By \Thm{fischer}, it suffices to show this for comparison-based $(t,\eps/2,1/8)$ testers.
By Yao's minimax lemma, it suffices to produce a distribution $\cD$ over functions $f:[n]^d\mapsto \NN$ such
that any deterministic comparison-based $(t,\eps/2,1/8)$-monotonicity tester for $\cD$ must have $t\geq s$,
where $s := \frac{d\log n - \log(1/\eps)}{8\eps}$.

Consider the distribution $\cD$ where we generate $f$ from $\cF_{m,\eps}$ and output $\f$. 
Suppose $t < s$. By \Prop{dist}, the deterministic comparison based monotonicity tester acts as a determinisitic comparison-based distinguisher
for $\cF_{m,\eps}$
making fewer than $s$ queries, contradicting \Lem{hard}.
\end{proof}

\section{Conclusion}
In this paper, we exhibit a lower bound of $\Omega(\eps^{-1}d\log n - \eps^{-1}\log \eps^{-1})$ queries on adaptive, two-sided monotonicity testers for functions $f:[n]^d \mapsto \NN$, matching the upper bound of $O(\eps^{-1}d\log n)$ queries of~\cite{ChSe13}. Our proof hinged on two things: that for monotonicity on any partial order one can focus on comparison-based testers, and a lower bound on comparison-based testers for the hypercube domain.  Some natural questions are left open. Can one focus on some restricted class of testers for the Lipschitz property, and more generally, can one prove adaptive, two-sided lower bounds for the Lipschitz property testing on the hypergrid/cube? 
Currently, a $\Omega(d\log n)$-query non-adaptive lower bound is known for the problem~\cite{BlJh+12}.
Can one prove comparison-based lower bounds for monotonicity testing on a general $N$-vertex poset? For the latter problem, there is a $O(\sqrt{N/\eps})$-query non-adaptive tester, and a $\Omega(N^{\frac{1}{\log\log N}})$-query non-adaptive, two-sided error lower bound~\cite{FLNRRS02}.
Our methods do not yield any results for bounded ranges, but there are significant gaps in our understanding for that regime. 
For monotonicity testing of boolean functions $f:\{0,1\}^n \mapsto \{0,1\}$, the best adaptive lower bound
of $\Omega(\log n)$, while the best non-adaptive bound is $\Omega(\sqrt{n})$~\cite{FLNRRS02}. 

\bibliographystyle{alpha}
\bibliography{derivative-testing}

\end{document}